	\newcommand{\RR}{{\mathbb R}}
\def\cn{\mbox{\rm cr}}
\def\lcr{\mbox{\rm $\overline{\textnormal{\rm cr}}$}}
\def\cn{\mbox{\rm cr}}
\def\rk{\mbox{\rm Rank}}
\def\sgn{\mbox{\rm sgn}}
\renewenvironment{proof}
{{\bf Proof:}}{\hspace*{\fill}$\Box$\par\vspace{1mm}}
\def\cn{\mbox{\rm cr}}
\title{Approximating the rectilinear crossing number}
\author{Jacob Fox\inst{1}\thanks{Supported by a Packard Fellowship, by NSF CAREER award DMS 1352121, and by an Alfred P. Sloan Fellowship.} \and J\'anos Pach\inst{2}\thanks{Supported
by a Hungarian Science Foundation NKFI grant, by Swiss National Science Foundation Grants 200021-165977 and 200020-162884.}\and Andrew Suk\inst{3}\thanks{Supported by NSF grant DMS-1500153.}}
\institute{Stanford University, Stanford, CA, USA\\{\tt jacobfox@stanford.edu}\and EPFL, Lausanne, Switzerland and Courant Institute, New York, NY, USA\\ {\tt pach@cims.nyu.edu} \and University of Illinois at Chicago, Chicago, IL, USA\\ \texttt{suk@uic.edu}}
\begin{document}

\maketitle

\begin{abstract}
A \emph{straight-line} drawing of a graph $G$ is a mapping which assigns to each vertex a point in the plane and to each edge a straight-line segment connecting the corresponding two points.  The \emph{rectilinear crossing number} of a graph $G$, $\lcr(G)$, is the minimum number of pairs of crossing edges in any straight-line drawing of $G$.  Determining or estimating $\lcr(G)$ appears to be a difficult problem, and deciding if $\lcr(G)\leq k$ is known to be NP-hard.  In fact, the asymptotic behavior of $\lcr(K_n)$ is still unknown.

In this paper, we present a deterministic $n^{2+o(1)}$-time algorithm that finds a straight-line drawing of any $n$-vertex graph $G$ with $\lcr(G) + o(n^4)$ pairs of crossing edges.  Together with the well-known Crossing Lemma due to Ajtai et al.~and Leighton, this result implies that for any dense $n$-vertex graph $G$, one can efficiently find a straight-line drawing of $G$ with $(1 + o(1))\lcr(G)$ pairs of  crossing edges.

\end{abstract}


\section{Introduction}

A \emph{drawing} of a graph $G$ is a mapping $f$ that assigns to each vertex a distinct point in the plane and to each edge $uv$ a continuous arc connecting $f(u)$ and $f(v)$, not passing through the image of any other vertex.  Two edges in a drawing \emph{cross} if their interiors have a point in common.  The \emph{crossing number} of $G$, denoted by $\cn(G)$, is the minimum number of pairs of crossing edges in any drawing of $G$.  Hence, $\cn(G) = 0$ if and only if $G$ is planar.  Determining or estimating the crossing number of a graph is one of the oldest problems in graph theory, with over 700 papers written on the subject.  We refrain here from attempting to give an overview of the long history of crossing numbers and their applications in discrete and computational geometry, and refer the reader to the survey articles by Pach and T\'oth \cite{PT}, Schaefer \cite{S2}, and the extensive bibliography maintained by Vrt'o \cite{V}.

In the present paper, we focus on \emph{straight-line drawings} of a graph $G$, that is, drawings of $G$ where the edges are represented by straight-line segments.  We will assume that in all such drawings, no three vertices are collinear, and no point lies in the interior of three distinct edges.  The  \emph{rectilinear crossing number} of $G$, denoted by $\lcr(G)$, is the minimum number of pairs of crossing edges in any straight-line drawing of $G$.  Clearly $\cn(G) \leq \lcr(G)$, and a theorem of F\'ary \cite{F} states that $\lcr(G) = 0$ when $G$ is planar.  On the other hand, it was shown by Bienstock and Dean \cite{BD93} that there are graphs with crossing number four, whose rectilinear crossing numbers are arbitrarily large.

Determining the rectilinear crossing number of a graph appears to be a difficult problem.  In fact, the asymptotic value of $\lcr(K_n)$ is still unknown.  The exact values for $\lcr(K_n)$ are known for $n \leq 27$ and $n = 30$, and for large $n$, the current best known bounds are

$$0.379972{n\choose 4} < \lcr(K_n) < 0.380473{n\choose 4},$$

\noindent due to \'Abrego et al. \cite{Ab11} and Fabila-Monroy and L\'opez \cite{FML} respectively.  For more details on $\lcr(K_n)$, including its striking connection to Sylvester's four-point problem \cite{Syl64,Syl65}, see \cite{Ab,SW}.

From an algorithmic point of view, computing $\lcr(G)$ is known to be NP-hard \cite{B}.  More precisely, it is known to be $\exists \mathbb{R}$-complete, that is, complete for the existential theory of the reals (see \cite{S10,S2}).  On the other hand, many researchers have designed polynomial time algorithms for approximating crossing numbers of \emph{sparse} graphs. In particular, a seminal result of Hopcroft and Tarjan \cite{HT} is that there is a linear time algorithm for testing planarity of a graph.  Kawarabayashi and Reed \cite{KR} generalized their result and established a linear time algorithm that decides whether $\cn(G) \leq k$ when $k$ is fixed.  Leighton and Rao \cite{LR} obtained an efficient algorithm that finds a drawing of any bounded-degree $n$ vertex graph $G$ with at most $O(\log^4n)(n + \cn(G))$ pairs of crossing edges.  This was later improved by Even, Guha, and Schieber \cite{EGS} to $O(\log^3n)(n + \cn(G))$, and further improved by Arora, Rao, and Vazirani \cite{ARV} to $O(\log^2n)(n + \cn(G))$.  For more results on computing $\cn(G)$ for bounded degree graphs, see \cite{Ch}.

For \emph{dense} graphs $G$, very little is known about $\lcr(G)$, and as mentioned above, not even the asymptotic value of $\lcr(K_n)$.  Our main result is the following.

\begin{theorem}\label{main}

There is a deterministic $n^{2 + o(1)}$-time algorithm for constructing a straight-line drawing of any $n$-vertex graph $G$ in the plane with $$\lcr(G) + O(n^4/(\log\log n)^{\delta})$$ crossing pairs of edges, where $\delta>0$ is an absolute constant.

\end{theorem}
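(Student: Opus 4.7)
The plan combines a weak regularity partition of $G$ with brute-force enumeration of order types on a small number of cluster representatives. Let $\epsilon=\epsilon(n)\to 0$ be chosen below and $k=k(\epsilon)\le 2^{O(1/\epsilon^2)}$. First I would apply a deterministic implementation of the Frieze--Kannan weak regularity lemma to $G$ to obtain, in $n^{2+o(1)}$ time, an equipartition $V(G)=V_1\sqcup\cdots\sqcup V_k$ whose cluster densities $d_{ij}=e(V_i,V_j)/(|V_i||V_j|)$ satisfy
\[
\Bigl|e(A,B)-\sum_{i,j}d_{ij}|A\cap V_i||B\cap V_j|\Bigr|\le\epsilon n^2\qquad\text{for all }A,B\subseteq V(G).
\]
Next, enumerate the $k^{O(k)}$ combinatorial order types of $k$ points in general position; for each representative $(q_1,\ldots,q_k)$ compute
\[
W(q_1,\ldots,q_k)=\tfrac{1}{8}\sum_{\substack{(i,j,l,m)\in[k]^4\\ \text{distinct}}}\mathbf{1}\bigl[\mathrm{conv}\{q_i,q_j,q_l,q_m\}\text{ has diagonals }q_iq_l,\,q_jq_m\bigr]\cdot d_{il}d_{jm}|V_i||V_j||V_l||V_m|,
\]
and select $q^\star$ minimizing $W$. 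The algorithm outputs the drawing placing each $V_i$ inside a tiny disk $D_i$ around $q_i^\star$, with the disks small enough that any selection of one point per disk realizes the order type of $q^\star$, and with vertices inside each $D_i$ placed generically (say on a small convex arc).

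The analysis has three ingredients. \emph{(i) Output cost.} Crossings split into those with vertices in four distinct clusters (contributing exactly $W(q^\star)$, because cluster-level convexity determines vertex-level convexity and $\sum_{a\in V_i,c\in V_l}\mathbf{1}[ac\in E]=d_{il}|V_i||V_l|$) and those with two or more vertices in a common cluster (at most $O(n^4/k)$). \emph{(ii) A $4$-linear counting lemma.} Writing $\mathrm{cr}(G,f)=\tfrac{1}{8}\sum X_{abcd}(f)\,\mathbf{1}[ac,bd\in E]$, where $X_{abcd}(f)\in\{0,1\}$ indicates convex position with diagonals $\{ac,bd\}$, one telescopes $E_{ac}=d_{[a][c]}+(E_{ac}-d_{[a][c]})$ twice and bounds each of the resulting error terms using the Frieze--Kannan cut-norm estimate to obtain, for \emph{any} drawing $f^\star$,
\[
\mathrm{cr}(G,f^\star)=\tfrac{1}{8}\sum_{(a,b,c,d)}X_{abcd}(f^\star)\,d_{[a][c]}d_{[b][d]}+O(\epsilon n^4).
\]
\emph{(iii) Random cluster representatives.} If $f^\star$ is optimal and $v_i\in V_i$ are drawn independently and uniformly, then the order type $\tau$ of $(f^\star(v_1),\ldots,f^\star(v_k))$ is among those enumerated, and by independence of the $v_i$ across clusters, $\mathbb{E}[W(f^\star(v_1),\ldots,f^\star(v_k))]$ equals the main term on the right of (ii) up to an $O(n^4/k)$ contribution from $4$-tuples with repeated cluster indices. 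Hence some realization of $\tau$ satisfies $W(\tau)\le\lcr(G)+O(\epsilon n^4+n^4/k)$, and therefore so does $W(q^\star)$.

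Combining (i)--(iii), the output has at most $\lcr(G)+O(\epsilon n^4+n^4/k)$ crossings. Choosing $\epsilon=(\log\log n)^{-\delta}$ for a small absolute constant $\delta>0$ gives $k\le 2^{\mathrm{poly}(\log\log n)}=n^{o(1)}$, total running time $n^{2+o(1)}$, and error $O(n^4/(\log\log n)^\delta)$, as claimed. The hardest step will be the counting lemma in (ii): although the coefficient tensor $X(f^\star)$ depends on an arbitrary optimal drawing, its uniform boundedness ($X\in\{0,1\}$) lets one apply the cut-norm estimate term by term. A secondary technical point is providing a deterministic construction of the weak regularity partition fast enough at this scale of $\epsilon$.
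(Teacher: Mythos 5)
Your overall architecture mirrors the paper's: a deterministic Frieze--Kannan weak regularity partition (the paper uses the algorithm of Dellamonica et al.), a brute-force search over the small reduced graph (the paper uses an order-type enumeration with the Basu--Pollack--Roy real-solvability algorithm, Lemma~\ref{brute2}), and a blow-up drawing that places each cluster inside a tiny disk. The telescoping analysis is also in the same spirit as the paper's Lemmas~\ref{cross} and~\ref{blowup}. The substantive difference lies in how you justify the counting step, and that is where there is a genuine gap.

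Your step~(ii) asserts a ``$4$-linear counting lemma'': for any drawing $f^\star$, replacing $E_{ac}E_{bd}$ by $d_{[a][c]}d_{[b][d]}$ in the crossing sum changes it by only $O(\varepsilon n^4)$, and you claim this follows because the coefficient tensor $X_{abcd}(f^\star)\in\{0,1\}$ is bounded, so ``one applies the cut-norm estimate term by term.'' This is not valid. The Frieze--Kannan condition controls $\sum_{a\in S,\,c\in T}(E_{ac}-d_{[a][c]})$ only when the coefficient pattern is a rectangle (or a bounded combination of rectangles); a general $\{0,1\}$-valued coefficient matrix can correlate almost perfectly with $E-d$ even when the cut norm is tiny, since $\sup_{g:\|g\|_\infty\le 1}|\langle g, E-d\rangle|$ is the $L^1$ norm, not the cut norm. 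For fixed $b,d$, the set $\{(a,c):X_{abcd}(f^\star)=1\}$ (pairs whose segment crosses $bd$ with the right convexity) is a semi-algebraic set that is \emph{not} a union of few rectangles, so boundedness alone buys you nothing, and the bound $O(\varepsilon n^4)$ you claim is in fact stronger than what the paper proves (their Lemma~\ref{cross} loses an exponent, giving $\varepsilon^{1/(4C)}n^4$).

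What the paper does to close exactly this gap is to invoke a second, geometric regularity lemma---the same-type transversal partition of Fox--Pach--Suk (Theorem~\ref{partition})---applied to the \emph{point set} of the optimal drawing. That lemma partitions the points so that, for all but an $\varepsilon'$-fraction of quadruples of parts, the crossing pattern is constant on the whole quadruple of parts, i.e.\ $X_{abcd}$ becomes a step function that \emph{is} a union of $O(K^4)$ rectangles. Only then does the cut-distance (or Frieze--Kannan) bound become applicable term by term, and even so one must take the geometric partition's $K$ to be much smaller than the Frieze--Kannan $K$ to control the accumulation of errors. Without this geometric regularity ingredient, your step~(ii) is unsupported, and the rest of the argument (which I think is otherwise fine, including the random-representative averaging in~(iii)) does not go through.
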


A classic result of Ajtai et al.~\cite{Aj} and Leighton \cite{L}, known as the Crossing Lemma, implies that the rectilinear crossing number of any $n$-vertex graph with $e$ edges is at least $\frac{e^3}{64n^2} - 4n$.  Hence all $n$-vertex graphs $G$ with $\Omega(n^2)$ edges satisfy $\lcr(G) \geq \Omega(n^4)$.  This implies the following.

\begin{corollary}

There is a deterministic $n^{2 + o(1)}$-time algorithm for constructing a straight-line drawing of any $n$-vertex graph $G$ with $|E(G)| \geq \varepsilon n^2$, where $\varepsilon > 0$ is fixed, such that the drawing has at most $(1 + o(1))\lcr(G)$ crossing pairs of edges.

\end{corollary}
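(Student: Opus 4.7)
The plan is to deduce the Corollary directly from Theorem~\ref{main} and the Crossing Lemma. I would first run the algorithm of Theorem~\ref{main} on the input graph $G$; in deterministic time $n^{2+o(1)}$ this produces a straight-line drawing with at most $\lcr(G)+Cn^{4}/(\log\log n)^{\delta}$ crossing pairs, for absolute constants $C,\delta>0$. The only remaining task is to verify that this additive term is swallowed up by a multiplicative $(1+o(1))$ factor in front of $\lcr(G)$, under the density hypothesis $|E(G)|\ge\varepsilon n^{2}$.

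For this, I would invoke the Crossing Lemma of Ajtai et al.~and Leighton, quoted in the introduction: every $n$-vertex graph with $e$ edges satisfies $\lcr(G)\ge\cn(G)\ge e^{3}/(64n^{2})-4n$. Substituting $e\ge\varepsilon n^{2}$ yields $\lcr(G)\ge \varepsilon^{3}n^{4}/64-4n=\Omega_{\varepsilon}(n^{4})$. Consequently $Cn^{4}/(\log\log n)^{\delta}=O_{\varepsilon}(\lcr(G)/(\log\log n)^{\delta})=o(\lcr(G))$, so the output drawing has at most $(1+o(1))\lcr(G)$ crossings, as claimed. The running time is inherited unchanged from Theorem~\ref{main}.

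The main obstacle in establishing this statement is packaged entirely into Theorem~\ref{main} itself (which, one would expect, proceeds through a fast weak-regularity partition of $G$ computable in $n^{2+o(1)}$ time, together with a near-optimal straight-line drawing of the constant-size weighted quotient graph and a counting-lemma-style argument relating the two crossing counts). Once Theorem~\ref{main} is granted, no further algorithmic, combinatorial, or geometric work is required: the deduction above is essentially a one-line calculation, and the role of the density hypothesis $|E(G)|\ge\varepsilon n^{2}$ is precisely to guarantee via the Crossing Lemma that $\lcr(G)$ dominates the additive error term supplied by the theorem.
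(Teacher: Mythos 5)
Your deduction is correct and is precisely the paper's own argument: apply Theorem~\ref{main}, then use the Crossing Lemma bound $\lcr(G)\ge e^{3}/(64n^{2})-4n=\Omega_{\varepsilon}(n^{4})$ to absorb the additive $O(n^{4}/(\log\log n)^{\delta})$ error into a $(1+o(1))$ multiplicative factor. Nothing further is needed.
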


A sequence $(G_n: n = 1, 2, \ldots)$ of graphs with $|V(G_n)| = n$ is called \emph{quasi-random with density} $p$ (where $0 < p < 1$) if, for all subsets $X,Y \subset V(G_n)$, $e_{G_n}(X,Y) = p|X||Y| + o(n^2)$. An important result of Chung, Graham, and Wilson \cite{CGW} shows that being quasi-random with density $p$ is equivalent to many other properties almost surely satisfied by the random graph $G(n,p)$. Studying properties of quasi-random graphs has been an important research direction with numerous applications. In Section~\ref{quasisect}, we prove the following result.

\begin{theorem}\label{quasithm}

Fix $0 < p < 1$ and let $(G_n : n  = 1, 2, \ldots)$ be a sequence of graphs that is quasi-random with density $p$.  Then

$$\lcr(G_n) = (1 + o(1))p^2\cdot\lcr(K_n).$$

\end{theorem}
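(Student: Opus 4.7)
The plan is to reduce both directions $\lcr(G_n) \le (1+o(1)) p^2 \lcr(K_n)$ and $\lcr(G_n) \ge (1-o(1)) p^2 \lcr(K_n)$ to a single counting lemma: for any straight-line drawing $D$ of $K_n$ (i.e., any placement of the $n$ vertices at points in general position in the plane), letting $c(D)$ denote the number of crossings of $D$ (equivalently, the number of 4-subsets of the vertex set in convex position), the number of $D$-crossings whose two edges both belong to $E(G_n)$ equals $p^2\, c(D) + o(n^4)$. The upper bound follows by taking $D$ to be an optimal drawing of $K_n$: then $c(D) = \lcr(K_n) = \Theta(n^4)$ by the Crossing Lemma, so the restriction of $D$ to $E(G_n)$ yields $\lcr(G_n) \le p^2 \lcr(K_n) + o(n^4) = (1+o(1)) p^2 \lcr(K_n)$. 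The lower bound follows by taking $D$ to be an optimal drawing of $G_n$, completed to $K_n$ by drawing the missing edges as straight segments: then $c(D) \ge \lcr(K_n)$, while the $G_n$-restricted crossing count of $D$ is exactly $\lcr(G_n)$.

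To prove the counting lemma, the first step is geometric. Apply the B\'ar\'any--Valtr Same-Type Lemma (iterated to produce a nearly balanced partition) to the vertex positions: for any $\epsilon>0$ we obtain a partition $V = V_1 \cup \cdots \cup V_K$ with $K = K(\epsilon)$ parts, each of size $\Theta(n/K)$, such that for every choice of four distinct indices $i_1, \ldots, i_4 \in [K]$, all transversals $(v_1,v_2,v_3,v_4) \in V_{i_1} \times \cdots \times V_{i_4}$ have the same order type. In particular, whether such a transversal is in convex position, and, if so, which pair of edges forms its diagonals, depends only on the tuple of part indices.

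Now count the $D$-crossings lying in $G_n$: these are precisely the 4-subsets in convex position whose two diagonals both belong to $E(G_n)$. The 4-subsets with two vertices in a common part contribute only $O(n^4/K) = o(n^4)$ by balancedness. For each convex-type part-tuple (relabeled so that $V_{i_1}, V_{i_2}, V_{i_3}, V_{i_4}$ appear in that cyclic order around the common convex hull), the contributing transversals are those $(v_1,v_2,v_3,v_4) \in V_{i_1} \times \cdots \times V_{i_4}$ with $v_1 v_3, v_2 v_4 \in E(G_n)$; since the two required edges span disjoint pairs of parts, this count factorizes as
$$e_{G_n}(V_{i_1}, V_{i_3}) \cdot e_{G_n}(V_{i_2}, V_{i_4}).$$
The discrepancy property of quasi-randomness with density $p$ gives $e_{G_n}(X, Y) = p|X||Y| + o(n^2)$ for any vertex subsets $X, Y$, so each factor equals $p|V_{i_1}||V_{i_3}| + o(n^2)$, and since $K$ is fixed their product equals $(p^2 + o(1))|V_{i_1}||V_{i_2}||V_{i_3}||V_{i_4}|$. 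Summing over the $O(K^4)$ convex-type part 4-tuples recovers $p^2$ times the number of convex-position transversals with four distinct part labels, which differs from $c(D)$ only by the $O(n^4/K)$ count of 4-subsets with a repeated part. This yields $p^2 c(D) + o(n^4)$, as required.

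The main technical obstacle is the geometric partition step: the classical Same-Type Lemma guarantees parts of linear size but not of comparable size, so an iteration or refinement is needed to produce an approximately balanced same-type partition, which is what allows us to control the error from 4-subsets with two vertices in a common part. Once the partition is in hand, the remainder is a routine two-fold application of the quasi-random discrepancy property, once per diagonal of each crossing.
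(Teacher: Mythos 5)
Your proof follows essentially the same route as the paper: a same-type transversal partition combined with the quasi-random discrepancy property, applied once to an optimal drawing of $K_n$ for the upper bound and once to an optimal drawing of $G_n$ for the lower bound. You have cleanly repackaged the paper's two arguments as a single counting lemma, which is a nice way to organize the two directions. There is one technical inaccuracy worth flagging: you claim that iterating the B\'ar\'any--Valtr Same-Type Lemma produces an approximately balanced partition in which \emph{every} quadruple of parts has same-type transversals. This is too strong to hold in general; for example, one can always form a partition in which a fourth part straddles the line through two points placed as singletons in two other parts, so not all quadruples can be made to behave. What the paper actually uses (its Theorem~\ref{partition}, the Fox--Pach--Suk same-type regularity lemma) is an equitable partition into $K \leq \varepsilon^{-C}$ parts in which all but at most $\varepsilon\binom{K}{4}$ quadruples of parts have same-type transversals. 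The bad quadruples contribute an extra $\varepsilon\binom{K}{4}\lceil n/K\rceil^4 = O(\varepsilon n^4)$ to the error, which is absorbed into the $o(n^4)$ term. With the geometric claim weakened to this ``almost all quadruples'' form, your counting lemma and both applications of it go through unchanged and match the paper's argument.
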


\noindent More generally, we show any two edge-weighted graphs which are close in cut-distance have rectilinear crossing numbers which are close (see Lemma \ref{cross}). For results on crossing numbers of random graphs, consult \cite{ST}.

\medskip

\noindent \textbf{Organization.}  In the next section, we collect several geometric results on planar point sets and give an exponential time algorithm for computing the rectilinear crossing number of a (small) graph.  In Section \ref{sect2}, we show that if two graphs are close in cut-distance, then their rectilinear crossing numbers are approximately the same.  In Section \ref{algorithm}, we prove Theorem \ref{main}.   Finally in Section \ref{quasisect}, we prove Theorem \ref{quasithm}.

We omit floor and ceiling signs whenever they are not crucial. All logarithms are base 2.

\section{Order types and same-type transversals}

Let $V = (v_1,\ldots,v_n)$ be an $n$-element point sequence in $\mathbb{R}^2$ in general position, that is, no three members of $V$ are collinear.  The \emph{order type} of $V$ is the mapping $\chi:{V \choose  3}\rightarrow \{+1,-1\}$ (positive orientation, negative orientation), assigning each triple of $V$ its orientation.  By setting $v_i = (x_i,y_i) \in \mathbb{R}^2$, for $i_1  < i_2 < i_3$,

$$\chi(\{v_{i_1},v_{i_2},v_{i_3}\})  =  \sgn \hspace{.05cm} \det\left(\begin{array}{ccc}
                                                           1 & 1 &  1 \\
                                                           x_{i_1} & x_{i_2} & x_{i_3} \\
                                                           y_{i_1} & y_{i_2} & y_{i_3}
                                                         \end{array}\right).$$

\noindent Therefore, two $n$-element point sequences $V=(v_1,\ldots, v_n)$ and $U = (u_1,\ldots, u_n)$ have the same order type if they are ``combinatorially equivalent."   By lexicographically ranking each triple $(i_1,i_2,i_3)$, where $1 \leq i_1 < i_2 < i_3\leq n$, we can describe each order type $\chi$ with the vector $(\chi_1,\chi_2,\ldots) \in \{-1,+1\}^{n\choose 3}$, such that $\chi_j = +1$ if and only if $\chi(\{v_{i_1},v_{i_2},v_{i_3}\})  > 0$ and $\rk(i_1,i_2,i_3) = j$.  We will call vectors $\chi^{\ast} \in  \{-1,+1\}^{n\choose 3}$ \emph{abstract order types}, and we say that an abstract order type $\chi^{\ast}$ is \emph{realizable} if there is a point set $V$ in the plane whose order type realizes $\chi^{\ast}$.  The concept of order types was introduced by Goodman and Pollack \cite{GP83} and has played a crucial role in gathering knowledge about crossing numbers.  See \cite{GP83,GP93} for more background on order types.

Given $k$ disjoint subsets $V_1,\ldots, V_k \subset V$, a \emph{transversal} of $(V_1,\ldots,V_k)$ is any $k$-element sequence $(v_1,\ldots, v_k)$ such that $v_i \in V_i$ for all $i$.   We say that the $k$-tuple of parts $(V_1,\ldots, V_k)$  has \emph{same-type} transversals if all of its transversals have the same order type.  One of the key ingredients in the proof of Theorem \ref{main} is the following regularity lemma for same-type transversals established by the authors in \cite{FPS}.  A partition on a finite set $V$ is called {\em equitable} if any two parts differ in size by at most one.

\begin{theorem}\label{partition}

There is an absolute constant $C$ such that the following holds.  For each $0 < \varepsilon < 1$ and for any finite point set $V$ in $\RR^2$, there is an equitable partition $V = V_1\cup V_2\cup \cdots \cup V_K$, with $1/\varepsilon < K < \varepsilon^{-C}$, such that all but at most $\varepsilon{K\choose 4}$ quadruples of parts $\{V_{i_1},V_{i_2},V_{i_3},V_{i_4}\}$ have same-type transversals.
\end{theorem}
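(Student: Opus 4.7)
My plan is to reduce the quadruple condition to a homogeneity condition for triples, and then invoke a polynomial-size strong regularity lemma for a low-complexity semi-algebraic $3$-uniform hypergraph. Observe first that a quadruple of parts $(V_{i_1}, V_{i_2}, V_{i_3}, V_{i_4})$ has same-type transversals exactly when each of its $\binom{4}{3}=4$ constituent triples of parts has same-type transversals, since the order type of a $4$-tuple in general position is determined by its four oriented triples. Consequently, if I can produce an equitable partition $V = V_1 \cup \cdots \cup V_K$ with $1/\varepsilon < K < \varepsilon^{-C}$ in which at most $(\varepsilon/4)\binom{K}{3}$ triples of parts are \emph{non-homogeneous} -- meaning they admit two transversals of opposite orientation -- then a union bound over the $K-3$ quadruples extending each bad triple shows that at most $\varepsilon\binom{K}{4}$ quadruples fail the same-type property.

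The target statement thus concerns the $3$-uniform \emph{orientation hypergraph} $H$ on $V$ whose edges are the triples of positive orientation. Membership in $H$ is governed by the sign of a single quadratic polynomial in the six coordinates of a triple, so $H$ is semi-algebraic of constant description complexity. For such hypergraphs I would invoke (or reprove) the polynomial-size strong regularity lemma the authors establish in the cited reference. The natural proof strategy is recursive geometric refinement: starting from the trivial partition, at each stage draw a small sample $S$ from each current cell, form the arrangement of the $O(|S|^2)$ lines spanned by $S$, and intersect every existing cell with the $O(r^2)$ cells of a Chazelle $(1/r)$-cutting of this arrangement. A refined triple of cells $(C_1, C_2, C_3)$ is automatically homogeneous whenever the convex hull of $V \cap C_3$ lies on a single side of every line spanned by $V \cap C_1$ and $V \cap C_2$; iterating a bounded number of times with $r$ polynomial in $1/\varepsilon$ should drive the fraction of non-homogeneous triples below $\varepsilon/4$ while keeping the total number of cells polynomial in $1/\varepsilon$.

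The principal obstacle is twofold. First, a $(1/r)$-cutting controls an explicit arrangement of lines, but non-homogeneity of a triple of cells is caused by lines through \emph{all} pairs of points in $V \cap C_1$ and $V \cap C_2$, not only those spanning sampled points; closing this gap requires a concentration argument, or the use of relative $(p,\varepsilon)$-approximations in place of raw samples, together with an averaging over which cell plays the role of $C_3$. Second, cuttings produce trapezoidal cells whose intersection with $V$ is far from uniform, whereas the final partition must be equitable. This is handled by subdividing oversized cells with balanced axis-parallel cuts -- each introducing only a controlled additional fraction of bad triples -- and pooling undersized fragments into a single exceptional bucket whose total size is absorbed into the $\varepsilon$-slack. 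The exponent $C$ in the bound $K < \varepsilon^{-C}$ is ultimately controlled by the combined cost of these two steps across the finitely many rounds of refinement.
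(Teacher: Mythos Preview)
The paper does not prove Theorem~\ref{partition}; it is quoted verbatim from the authors' earlier work~\cite{FPS} and used as a black box. There is therefore no ``paper's own proof'' to compare against here. Your reduction from quadruples of parts to triples of parts via the observation that the order type of four points is determined by the orientations of its four sub-triples, followed by a union bound, is correct and is indeed the natural first move. Your subsequent plan---encode positive orientation as a semi-algebraic $3$-uniform hypergraph of bounded description complexity and appeal to the polynomial regularity lemma of~\cite{FPS}---is exactly the content of that reference, so in effect you are proposing to re-derive the cited theorem rather than to supply a new argument.

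As a proof \emph{sketch} this is on the right track, but as written it is not a proof: the two ``obstacles'' you flag (passing from a cutting of sampled lines to control over all lines spanned by pairs from two cells, and converting the geometric cell decomposition into an equitable vertex partition) are the substantive parts of the argument in~\cite{FPS}, and you have described what needs to be done rather than done it. In particular, the iterative refinement in~\cite{FPS} does not proceed by sampling and concentration but by a direct polynomial-partitioning/cutting argument that bounds the fraction of non-homogeneous triples at each round; your sampling-based variant would need a genuine $\varepsilon$-net or VC-dimension argument to close the first gap. If your intent is simply to invoke~\cite{FPS}, then the first paragraph of your proposal already suffices and the rest can be replaced by a citation.
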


For small graphs $G = (V,E)$ with $|V(G)| = K$, we can compute $\lcr(G)$ as follows.  We generate ${K\choose 3}$ polynomials $f_1,f_2,\ldots, f_{{K\choose 3}} \in \RR[x_1,\ldots, x_K,y_1,\ldots, y_K]$, where for $1 \leq i_1  < i_2 < i_3\leq K$ and $\rk(i_1,i_2,i_3) = j$, we have

$$f_j = \hspace{.05cm} \det\left(\begin{array}{ccc}
                                                           1 & 1 &  1 \\
                                                           x_{i_1} & x_{i_2} & x_{i_3} \\
                                                           y_{i_1} & y_{i_2} & y_{i_3}
                                                         \end{array}\right).$$

Fix an abstract order type $\chi^{\ast}\in \{+1,-1\}^{K\choose 3}$, and let $j_1,\ldots,j_r$ be the indices for which $\chi^{\ast}_{j_{\ell}} = +1$, and let $j'_1,\ldots, j'_s$ be the indices for which $\chi^{\ast}_{j'_{\ell}} = -1$.  In order to decide if $\chi^{\ast}$ is realizable, we need to see if there are real solutions to the polynomial system

$$f_{j_1} > 0, \ldots, f_{j_r} > 0 \hspace{1cm} f_{j'_1}  < 0 ,\ldots,  f_{j'_s} < 0.$$

\noindent This is a special case of the satisfiability problem in the existential theory of the reals (see \cite{BLS}).  By an algorithm of Basu, Pollack, and Roy \cite{BPR}, we can decide if the polynomial system above has real solutions in $2^{O(K\log K)}$ time.  Moreover, if there are solutions, the algorithm will output a solution $(x_1,\ldots, x_K,y_1,\ldots, y_K)$, where each coordinate uses at most $2^{O(K \log K)}$ bits.  Hence if $\chi^{\ast}$ is realizable, we obtain a point set $V = \{v_1,\ldots , v_K\}$ in the plane that realizes $\chi^{\ast}$, and each point has at most $2^{O(K\log K)}$ bits.

If we do obtain such a point set $V$, we then compute the minimum number of pairs of crossing edges over all straight-line drawings of $G$ which uses $V$ as its vertex set.  This can be done in $2^{O(K\log K)}$ time.   By repeating the procedure above over all $2^{{K\choose 3}}$ abstract order types $\chi^{\ast}$, we have the following.

\begin{lemma}\label{brute}
Given a graph $G$ on $K$ vertices, we can find a straight-line drawing of $G$ with $\lcr(G)$ pairs of crossing edges in $2^{O(K^3)}$ time.

\end{lemma}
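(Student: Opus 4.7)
The plan is to enumerate over all abstract order types, realize each one whenever possible using the quantifier-elimination algorithm of Basu, Pollack, and Roy, and then for every realized point set determine the best way to lay out $G$ on it by brute force.

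First, observe that whether two disjoint edges $v_{i_1}v_{i_2}$ and $v_{i_3}v_{i_4}$ cross in a straight-line drawing on a point set $V$ (in general position) is determined entirely by the orientations $\chi(\{v_{i_a},v_{i_b},v_{i_c}\})$ of the four triples they span, i.e., by the restriction of the order type to these four points. Consequently, if two point sets $V,U$ of size $K$ realize the same abstract order type $\chi^{\ast}$, then for any bijection $\pi:V(G)\to V$ and the corresponding bijection $\pi':V(G)\to U$ obtained by matching points with equal order-type labels, the straight-line drawings of $G$ via $\pi$ and $\pi'$ have the same number of crossings. Therefore the minimum number of crossings over straight-line drawings of $G$ whose vertex set realizes $\chi^{\ast}$ depends only on $\chi^{\ast}$, and $\lcr(G)$ equals the minimum of this quantity over all realizable $\chi^{\ast}\in\{+1,-1\}^{\binom{K}{3}}$.

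Next I would enumerate all $2^{\binom{K}{3}}$ abstract order types. For each candidate $\chi^{\ast}$, form the associated semialgebraic feasibility problem given in the paragraph preceding the lemma, and invoke the Basu--Pollack--Roy algorithm \cite{BPR}, which in $2^{O(K\log K)}$ time either certifies infeasibility or returns a witness point set $V=\{v_1,\dots,v_K\}\subset\mathbb{R}^2$ realizing $\chi^{\ast}$, with each coordinate encoded in $2^{O(K\log K)}$ bits. For each realizable $\chi^{\ast}$ I would then iterate over all $K!$ bijections from $V(G)$ to $V$, count the crossings directly by examining each of the $\binom{|E(G)|}{2}\le K^4$ pairs of edges (using the orientations stored in $\chi^{\ast}$, so no further algebra is needed), and record the minimum along with a bijection that achieves it. Keeping the globally best drawing encountered yields a straight-line drawing attaining $\lcr(G)$.

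For the running time, there are $2^{\binom{K}{3}} = 2^{O(K^3)}$ abstract order types; each realizability test costs $2^{O(K\log K)}$; and each realized order type contributes a further $K!\cdot K^{O(1)} = 2^{O(K\log K)}$ work to optimize over bijections. The dominant term is the outer enumeration, giving overall $2^{O(K^3)}$ time. The only nontrivial ingredient is the Basu--Pollack--Roy procedure; everything else is enumeration and counting. The mild subtlety to check is that the bit-size $2^{O(K\log K)}$ of the coordinates returned by \cite{BPR} is absorbed by the same $2^{O(K\log K)}$ factor in the per-order-type cost, so arithmetic on these coordinates does not disturb the final bound.
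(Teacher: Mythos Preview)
Your proof is correct and follows essentially the same approach as the paper: the paper's argument (spelled out in the paragraphs immediately preceding the lemma) also enumerates all $2^{\binom{K}{3}}$ abstract order types, tests realizability and extracts a witness via Basu--Pollack--Roy in $2^{O(K\log K)}$ time, and then minimizes crossings over all placements of $G$ on the resulting point set in $2^{O(K\log K)}$ time. Your write-up adds the explicit justification that crossings depend only on the order type and spells out the $K!$ enumeration over bijections, but the strategy and the running-time analysis are the same.
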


\noindent  See \cite{AiAK,AiK} for an alternative heuristic method for computing $\lcr(G)$.

\section{Cut-distance and the Frieze--Kannan regularity lemma}\label{sect2}

An \emph{edge weighted graph} $G = (V,E)$ is a graph with weights $w_G(uv) \in [0,1]$ associated with each edge $uv \in E(G)$.  For convenience, set $w_G(uv) = 0$ if $uv \not\in E(G)$.  For $S,T \subset V(G)$, we define

$$e_G(S,T) = \sum\limits_{u \in S, v\in T}w_G(uv).$$

\noindent Note that if the sets $S$ and $T$ have a nonempty intersection, the weights of the edges running in $S\cap T$ are counted twice.  Let $G$ and $G'$ be two edge weighted labeled graphs with the same vertex set $V = \{v_1,\ldots, v_n\}$.  The \emph{cut-distance} between $G$ and $G'$ is defined as

$$d(G,G') = \max\limits_{S,T\subset V} \left|e_G(S,T) - e_{G'}(S,T)\right|.$$

\noindent Hence, the cut-distance between two labeled graphs measures how different the two graphs are when considering the size of various cuts.  This concept has played a crucial role in the work of Frieze and Kannan \cite{FK} on efficient approximation algorithms for dense graphs.  See \cite{BCL} and the book \cite{Lov} for more results on cut-distance.

We generalize the concept of crossing numbers to edge weighted graphs as follows.  Let $\mathcal{D}$ be a straight-line drawing of $G$ in the plane, and let $X_\mathcal{D} \subset {E(G) \choose 2}$ denote the set of pairs of crossing edges in the drawing.  The \emph{rectilinear crossing number} of the \emph{edge-weighted graph} $G$ is defined as

$$\lcr(G) = \min\limits_{\mathcal{D}} \sum\limits_{(uv,st) \in X_\mathcal{D}}w_G(uv)\cdot w_G(st), $$

\noindent where the minimum is taken over all straight-line drawings of $G$.  Thus for any unweighted graph $G = (V,E)$, we can assign weights $w_G(uv) = 1$ for $uv \in E(G)$ and $w_G(uv) = 0$ for $uv \not\in E(G)$ so that the definition of $\lcr(G)$ remains consistent.  By copying the proof of Lemma \ref{brute} almost verbatim, we have the following lemma.

\begin{lemma}\label{brute2}
Let $G$ be an edge weighted graph on $K$ vertices, where the weight of each edge uses at most $B$ bits.  Then we can find a straight-line drawing of $G$ with $\lcr(G)$ weighted edge crossings in $2^{O(K^3)}B^2$ time.

\end{lemma}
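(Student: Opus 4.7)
The plan is to adapt the proof of Lemma \ref{brute} essentially verbatim, changing only how the crossing count is computed for each order type. I would first observe that in any straight-line drawing $\mathcal{D}$ of $G$, the set $X_\mathcal{D}$ of pairs of crossing edges is determined entirely by the order type of the vertex point set, since two segments $uv$ and $st$ cross if and only if the four orientations $\chi(\{u,v,s\})$, $\chi(\{u,v,t\})$, $\chi(\{s,t,u\})$, $\chi(\{s,t,v\})$ have the appropriate signs. Hence, to compute $\lcr(G)$ it suffices to enumerate realizable abstract order types on $K$ labeled points and, for each one, evaluate the weighted crossing sum.

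Next I would loop over all $2^{\binom{K}{3}} = 2^{O(K^3)}$ abstract order types $\chi^{\ast} \in \{+1,-1\}^{\binom{K}{3}}$. For each $\chi^{\ast}$, I would feed the corresponding polynomial system (in $2K$ real variables) to the Basu--Pollack--Roy algorithm, which in $2^{O(K\log K)}$ time decides realizability and, when realizable, outputs a point set whose coordinates have bit length $2^{O(K\log K)}$. As in the unweighted setting, these coordinates are needed only to certify the drawing; the combinatorics of crossings is handled purely from $\chi^{\ast}$.

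For each realizable order type I would then compute $\sum_{(uv,st) \in X_\mathcal{D}} w_G(uv)\cdot w_G(st)$ combinatorially: iterate over the $O(K^4)$ quadruples of vertices, decide crossings directly from $\chi^{\ast}$, and accumulate the corresponding weight products. Each of the $O(K^4)$ products of two $B$-bit numbers costs $O(B^2)$ time using schoolbook multiplication, and the running total has only $O(B + \log K)$ bits, so the whole accumulation costs $O(K^4 B^2) = 2^{O(K\log K)} B^2$ per order type. Taking the minimum over all realizable order types (and keeping a point configuration that achieves it) gives a straight-line drawing with exactly $\lcr(G)$ weighted crossings.

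Summing over order types, the total running time is $2^{\binom{K}{3}} \cdot (2^{O(K\log K)} + O(K^4 B^2)) = 2^{O(K^3)} B^2$, as claimed. The only new ingredient compared with Lemma \ref{brute} is the observation that the weighted crossing count still depends only on $\chi^{\ast}$, so the geometric feasibility step is untouched; the sole additional cost is the arithmetic on $B$-bit weights, absorbed into the final $B^2$ factor. There is no real obstacle here --- the lemma is essentially a bookkeeping extension of Lemma \ref{brute}.
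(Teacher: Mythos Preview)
Your proposal is correct and matches the paper's approach exactly: the paper does not give a separate proof but simply says ``By copying the proof of Lemma~\ref{brute} almost verbatim, we have the following lemma,'' and your write-up is precisely that verbatim adaptation, with the one necessary addition of accounting for the $O(B^2)$ cost of multiplying $B$-bit weights.
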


Another key ingredient used in the proof of Theorem \ref{main} is a variant of Szemer\'edi's regularity lemma developed by Frieze and Kannan.  Szemer\'edi's regularity lemma~\cite{szemeredi} is one of the most powerful tools in modern combinatorics and gives a rough structural characterization of all graphs.  According to the lemma, for every $\varepsilon>0$ there is $K=K(\varepsilon)$ such that every graph has an equitable vertex partition into at most $K$ parts such that all but at most an $\varepsilon$ fraction of the pairs of parts behave ``regularly."\footnote{For a pair $(V_i,V_j)$ of vertex subsets, the density $d(V_i,V_j)$ is defined as $\frac{e_G(V_i,V_j)}{|V_i||V_j|}$. The pair $(V_i,V_j)$ is called $\varepsilon$-regular if for all $V_i' \subset V_i$ and $V_j' \subset V_j$ with $|V_i'| \geq \varepsilon |V_i|$ and $|V_j'| \geq \varepsilon |V_j|$, we have $|d(V'_i,V'_j)-d(V_i,V_j)| \leq \varepsilon$.} The dependence of $K$ on $1/\varepsilon$ is notoriously strong. It follows from the proof that $K(\varepsilon)$ may be taken to be an exponential tower of twos of height $\varepsilon^{-O(1)}$. Gowers \cite{Go97} used a probabilistic construction to show that such an enormous bound is indeed necessary.  This is quite unfortunate, because in algorithmic applications of the regularity lemma this parameter typically has a negative impact on the efficiency.   Consult \cite{CoFo}, \cite{MoSh}, \cite{FoLo} for other proofs that improve on various aspects of the result.

Frieze and Kannan \cite{FK} developed a weaker notion of regularity which is sufficient for certain algorithmic applications, and for which the dependence on the approximation parameter $\varepsilon$ is much better.  Let $\varepsilon > 0$ and let $G = (V,E)$ be a graph on $n$ vertices.  An equitable partition $\mathcal{P}:V = V_1\cup \cdots \cup V_K$ is said to be \emph{$\varepsilon$-Frieze-Kannan-regular} if for all subsets $S,T \subset V(G)$, we have

$$\left|e_G(S,T) - \sum\limits_{1 \leq i, j\leq K}  e_G(V_i,V_j)\frac{|S\cap V_i||T\cap V_j|}{|V_i||V_j|}    \right| < \varepsilon n^2.$$

\noindent Frieze and Kannan \cite{FK} showed that for any $\varepsilon > 0$, every graph $G = (V,E)$ has an $\varepsilon$-Frieze-Kannan-regular partition with $K$ parts, where $1/\varepsilon < K < 2^{O(\varepsilon^{-2})}$.  Moreover, such a partition can be found in randomized $O(n^2)$-time.  For the algorithm we present in the next section, we will use the following more recent algorithmic version due to Dellamonica et al.

\begin{theorem}[\cite{DKM}]\label{FKlemma} There is an absolute constant $c$ such that the following holds.  For each $\varepsilon > 0$ and for any graph $G = (V,E)$ on $n$ vertices, there is a deterministic algorithm which finds an $\varepsilon$-Frieze-Kannan-regular partition on $V$ with at most $2^{\varepsilon^{-c}}$ parts, and runs in $2^{2^{\varepsilon^{-c}}}n^2$-time.

\end{theorem}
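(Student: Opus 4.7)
The plan is to mirror the classical energy-increment proof of the Frieze--Kannan regularity lemma, while arranging every subroutine to be deterministic. Define the mean-square density potential of a partition $\mathcal{P}: V = V_1\cup\cdots\cup V_K$ by
\[
q(\mathcal{P}) = \frac{1}{n^2}\sum_{i,j=1}^{K} |V_i||V_j|\, d_G(V_i,V_j)^2,
\]
where $d_G(V_i,V_j) = e_G(V_i,V_j)/(|V_i||V_j|)$. This quantity lies in $[0,1]$ and is non-decreasing under refinement, and it will be our progress measure.

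Starting from the trivial partition, I would iterate the following dichotomy. Given the current partition $\mathcal{P}$, either (a) certify that $\mathcal{P}$ is already $\varepsilon$-Frieze--Kannan-regular and halt, or (b) produce explicit witness sets $S, T\subseteq V$ with
\[
\left|e_G(S,T) - \sum_{i,j} e_G(V_i,V_j)\frac{|S\cap V_i||T\cap V_j|}{|V_i||V_j|}\right| \geq \varepsilon n^2,
\]
and refine $\mathcal{P}$ by intersecting every part with each of $S,V\setminus S,T,V\setminus T$. A short Cauchy--Schwarz calculation shows that such a refinement increases $q$ by at least $\Omega(\varepsilon^2)$. Since $q \le 1$, at most $O(\varepsilon^{-2})$ refinements occur, and each multiplies the number of parts by at most $4$, so the final partition has at most $4^{O(\varepsilon^{-2})}\le 2^{\varepsilon^{-c}}$ parts. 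A final relabeling that moves at most $n/K$ vertices per part makes the partition equitable while changing any cut by only $o(\varepsilon n^2)$, which is absorbed into the error.

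The main obstacle is implementing step (b) deterministically in time $2^{\varepsilon^{-O(1)}} n^2$. This is equivalent to deterministically approximating the cut-norm of the $n\times n$ ``density defect'' matrix $A$ whose entry at $(u,v)$ with $u\in V_i, v\in V_j$ equals $w_G(uv) - d_G(V_i,V_j)$. A randomized approximation follows from the Alon--Naor SDP-based algorithm, but a deterministic version requires derandomization. I would exploit the structural observation that a near-optimal $(S,T)$ may be replaced by sets which are unions of atoms of a much coarser sub-partition obtained by a short deterministic sample: specifically, one can construct a family of only $2^{2^{\varepsilon^{-c}}}$ candidate pairs $(S,T)$ from a derandomized small sample of vertices (using an $\varepsilon/K^2$-net or small-bias space against the $K^2$-dimensional defect vectors), with the guarantee that the best candidate witnesses the cut-norm up to a factor of $1\pm\varepsilon/2$. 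Scanning the candidates costs $2^{2^{\varepsilon^{-c}}} n^2$ time, which dominates and yields the stated runtime; each refinement and cut evaluation then takes only $O(Kn + n^2)$ work. The delicate part of this derandomization is ensuring that the loss in net resolution does not interact badly with the $O(\varepsilon^{-2})$ outer iterations, which can be arranged by choosing the net parameter as a slightly smaller function of $\varepsilon$ inside the constant $c$.
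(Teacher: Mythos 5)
This theorem is not proved in the paper at all; it is cited as a black box from Dellamonica, Kalyanasundaram, Martin, R\"odl, and Shapira \cite{DKM}, so there is no internal proof to compare your proposal against. Your outer energy-increment loop, the $\Omega(\varepsilon^2)$ increase in the mean-square density potential under refinement by witness sets, and the resulting $4^{O(\varepsilon^{-2})} \leq 2^{\varepsilon^{-c}}$ bound on the number of parts are all the standard, correct skeleton of the Frieze--Kannan argument.

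The gap is in step (b), which you acknowledge is ``the main obstacle'' but then dispatch with an assertion rather than an argument. Two concrete problems. First, the defect matrix $A$ with entries $w_G(uv) - d_G(V_i,V_j)$ is not a low-dimensional object: it contains the full adjacency matrix of $G$ as a summand, so there is no ``$K^2$-dimensional'' structure for an $\varepsilon/K^2$-net or small-bias space to exploit, and the claim that optimal $(S,T)$ can be approximated by unions of atoms from a short deterministic sample is precisely the nontrivial content that a derandomization must supply, not something that can be taken for granted. Second, a multiplicative $1\pm\varepsilon/2$ approximation of the cut norm is too strong a target even with randomization --- the Alon--Naor SDP rounding achieves only a constant-factor approximation tied to Grothendieck's constant, and exact or near-exact cut-norm computation is hard. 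Fortunately only a one-sided, polynomial-loss guarantee is needed: if the cut norm of $A$ is at least $\varepsilon n^2$, output some $(S,T)$ with defect at least $\mathrm{poly}(\varepsilon) n^2$. A standard way to obtain such a guarantee deterministically, and closer to what the cited algorithm actually does, is spectral: a large cut norm forces a large top singular value $\sigma_1(A)$, one computes a top singular-vector pair by exact deterministic linear algebra in time $O(n^\omega)$ or so, and rounds the singular vectors to indicator vectors of $S$ and $T$ with only polynomial loss in $\varepsilon$, which is then absorbed into the constant $c$. As written, your derandomization step restates the theorem rather than proving it.
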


Given an $n$-vertex graph $G = (V,E)$, let $\mathcal{P}:V = V_1\cup \cdots \cup V_K$ be an $\varepsilon$-Frieze-Kannan-regular partition obtained from Theorem \ref{FKlemma}.  We now define two edge-weighted graphs $G/\mathcal{P}$ and $G_\mathcal{P}$ as follows.  Let $G/\mathcal{P}$ be the edge-weighted graph on the vertex set $\{1,\ldots, K\}$ and with edge weights $$w_{G/\mathcal{P}}(ij) = \frac{e_G(V_i,V_j)}{(n/K)^2} \hspace{1cm} 1\leq i \neq j \leq K.$$   Let $G_\mathcal{P}$ be an edge-weighted graph with vertex set $V = V(G)$, and with edge weights

$$ w_{G_\mathcal{P}}(uv)  = \left\{\begin{array}{cl}
                                                    \frac{e_G(V_i,V_j)}{(n/K)^2} & \textnormal{if $u \in V_i, v\in V_j$, $1 \leq i \neq j \leq K$;} \\\\
                                                    0 & \textnormal{if $u,v \in V_i$, $1 \leq i \leq K$.}
                                                  \end{array}\right. $$

Thus, the Frieze--Kannan regularity lemma says that $d(G,G_\mathcal{P}) < \varepsilon n^2$, which implies that $G/\mathcal{P}$ is a small graph that gives a good approximation of $G$. We now prove the following lemmas which establish a relationship between $\lcr(G), \lcr(G_{\mathcal{P}}),$ and $\lcr(G/\mathcal{P})$.

\begin{lemma}\label{cross}
Let $\varepsilon \in (0,1/2)$ and let $G$ and $G'$ be two $n$-vertex edge-weighted graphs on the same vertex set $V$.  If $d(G,G') < \varepsilon n^2$, then we have

$$|\lcr(G) - \lcr(G')| \leq \varepsilon^{\frac{1}{4C}}n^4,$$

\noindent where $C$ is an absolute constant from Theorem \ref{partition}.

\end{lemma}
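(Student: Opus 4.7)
The plan is to fix a straight-line drawing $\mathcal{D}$ of $G'$ realizing $\lcr(G')$, reuse its point positions as a (not necessarily optimal) straight-line drawing of $G$, and compare the two weighted crossing counts on this common drawing. Writing $N_\mathcal{D}(H) := \sum_{(e,e') \in X_\mathcal{D}} w_H(e)\, w_H(e')$ for any edge-weighted graph $H$ on the same vertex set, we have $N_\mathcal{D}(G') = \lcr(G')$ and $\lcr(G) \le N_\mathcal{D}(G)$. Thus a bound of the form $|N_\mathcal{D}(G) - N_\mathcal{D}(G')| \le \varepsilon^{1/(4C)} n^4$ yields $\lcr(G) - \lcr(G') \le \varepsilon^{1/(4C)} n^4$, and the symmetric argument (starting from an optimal drawing of $G$) handles the reverse direction.

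To control $|N_\mathcal{D}(G) - N_\mathcal{D}(G')|$, I would apply Theorem \ref{partition} to the $n$-point vertex set of $\mathcal{D}$ with parameter $\varepsilon' := \varepsilon^{1/(4C)}$, obtaining an equitable partition $V = V_1 \cup \cdots \cup V_K$ with $K < (\varepsilon')^{-C} = \varepsilon^{-1/4}$ and at most $\varepsilon' \binom{K}{4}$ bad quadruples of parts that fail to have same-type transversals. I then decompose each count $N_\mathcal{D}(H)$ according to how the four endpoints of each crossing edge-pair distribute among the parts. For an ordered quadruple of four distinct parts admitting same-type transversals, the crossing indicator is constant on all transversals, so the contribution to $N_\mathcal{D}(H)$ is $\mathbf{1}_{\mathrm{cross}} \cdot e_H(V_{i_1}, V_{i_2}) \cdot e_H(V_{i_3}, V_{i_4})$. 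The remaining contributions come from (a) the $\le \varepsilon' \binom{K}{4}$ bad quadruples, each contributing at most $O((n/K)^4)$ and totaling $O(\varepsilon' n^4)$, and (b) the $O(n^4/K) = O(\varepsilon' n^4)$ vertex $4$-tuples with at least two endpoints in a common part (weights $\le 1$).

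With the main-term structure aligned between $G$ and $G'$, the difference $N_\mathcal{D}(G) - N_\mathcal{D}(G')$ reduces, up to an $O(\varepsilon' n^4)$ error, to a sum over same-type ordered quadruples of $e_G(V_{i_1}, V_{i_2}) e_G(V_{i_3}, V_{i_4}) - e_{G'}(V_{i_1}, V_{i_2}) e_{G'}(V_{i_3}, V_{i_4})$. The hypothesis $d(G,G') < \varepsilon n^2$ gives $|e_G(A,B) - e_{G'}(A,B)| < \varepsilon n^2$ for all $A, B \subset V$, while each factor satisfies $e_H(V_i,V_j) \le (n/K)^2$. The identity $ab - a'b' = (a-a')b + a'(b-b')$ then bounds each summand by $O(\varepsilon n^4/K^2)$; summed over at most $K^4$ quadruples this produces $O(\varepsilon K^2 n^4) \le O(\varepsilon \cdot (\varepsilon')^{-2C} n^4) = O(\varepsilon^{1/2} n^4) \le O(\varepsilon^{1/(4C)} n^4)$ since $C \ge 1$. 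The main obstacle is precisely this three-way balance: the Theorem \ref{partition} defect $\varepsilon' n^4$, the part-collision defect $n^4/K$, and the cut-distance accumulation $\varepsilon (\varepsilon')^{-2C} n^4$ must all lie within the target $\varepsilon^{1/(4C)} n^4$, and the calibration $\varepsilon' = \varepsilon^{1/(4C)}$ is what simultaneously achieves this (with the remaining absolute constant absorbed by taking $\varepsilon$ sufficiently small, else the bound is trivial).
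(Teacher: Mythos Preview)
Your proposal is correct and follows essentially the same route as the paper: fix an optimal drawing of one graph, apply Theorem~\ref{partition} with parameter $\varepsilon^{1/(4C)}$ so that $K\le \varepsilon^{-1/4}$, split the crossing count into the main term $\sum e_H(V_{i_1},V_{i_2})e_H(V_{i_3},V_{i_4})$ over same-type convex quadruples plus an $O(\varepsilon^{1/(4C)}n^4)$ error from bad quadruples and part-collisions, and then compare the main terms for $G$ and $G'$ using $|e_G(A,B)-e_{G'}(A,B)|<\varepsilon n^2$. The only cosmetic difference is that the paper phrases the comparison asymmetrically (a lower bound on $\lcr(G)$ versus an upper bound on $\lcr(G')$ via the one-sided estimate $e_{G'}\le e_G+\varepsilon n^2$), while you bound $|N_{\mathcal D}(G)-N_{\mathcal D}(G')|$ directly using $ab-a'b'=(a-a')b+a'(b-b')$; the resulting arithmetic ($\varepsilon K^2 n^4\le \varepsilon^{1/2}n^4$) is identical.
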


\begin{proof}
Consider a straight-line drawing $\mathcal{D}$ of $G = (V,E)$ in the plane such that if $X_{\mathcal{D}} \subset {E \choose 2}$ denotes the set of pairs of crossing edges in $\mathcal{D}$, we have

\begin{equation}\label{sum}\lcr(G) = \sum\limits_{(e_1,e_2) \in X_{\mathcal{D}}}w_G(e_1)w_G(e_2).\end{equation}

With slight abuse of notation, let $V$ be the point set in the plane representing the vertices of $G$ in the drawing $\mathcal{D}$. We can assume that $V$ is in general position.  With approximation parameter $\varepsilon^{1/(4C)}$, we apply Theorem \ref{partition} to the point set $V$ and obtain an equitable partition $V = V_1\cup \cdots\cup V_K$, where $K \leq \varepsilon^{-1/4}$, such that all  but at most $\varepsilon^{1/(4C)}{K\choose 4}$ quadruples of parts $(V_{i_1},V_{i_2},V_{i_3},V_{i_4})$ have same-type transversals.  Let $T \subset{[K]\choose 4}$ be the set of quadruples $(i_1,i_2,i_3,i_4)$ such that $(V_{i_1}, V_{i_2}, V_{i_3}, V_{i_4})$ has same type transversal and every such transversal is in convex position.  Then for each such quadruple, we can order the elements $(i_1,i_2,i_3,i_4) \in T$ so that every segment with one endpoint in $V_{i_1}$ and the other in $V_{i_2}$ crosses every segment with one endpoint in $V_{i_3}$ and the other in $V_{i_4}$.  Therefore, we have

\begin{equation}\label{lower}\lcr(G) \geq  \sum\limits_{(i_1,i_2,i_3,i_4) \in T}e_G(V_{i_1},V_{i_2})e_G(V_{i_3},V_{i_4}).\end{equation}

\noindent On the other hand, let us consider the drawing $\mathcal{D}'$ of $G'$ on the same point set $V = V_1\cup\cdots \cup V_K$.  We say that the quadruple $(v_1,v_2,v_3,v_4) \in {V\choose 4}$ is \emph{bad} if two members lie in a single part $V_j$, or if all four members lie in distinct parts $V_{i_1},V_{i_2},V_{i_3},V_{i_4}$ such that $(V_{i_1},V_{i_2},V_{i_3},V_{i_4})$ does not have same-type transversals.  By Theorem \ref{partition}, we have at most

$$K{\lceil n/K\rceil\choose 2}{n\choose 2} + \varepsilon^{\frac{1}{4C}} {K\choose 4}\left\lceil\frac{n}{K}\right\rceil^4 \leq \frac{n^4}{4K}  + Kn^2 + \varepsilon^{\frac{1}{4C}}{n\choose 4} \leq 2\varepsilon^{\frac{1}{4C}}{n\choose 4} ,$$

\noindent bad quadruples.  Since each edge has weight at most one, we have

$$\lcr(G')\leq \sum\limits_{(i_1,i_2,i_3,i_4) \in T}e_{G'}(V_{i_1},V_{i_2})e_{G'}(V_{i_3},V_{i_4}) + 2\varepsilon^{\frac{1}{4C}}{n\choose 4}.$$

\noindent Since $d(G,G') < \varepsilon n^2$, and by (\ref{lower}), we have

$$\begin{array}{ccl}
    \lcr(G') & \leq & \sum\limits_{(i_1,i_2,i_3,i_4) \in T}e_{G'}(V_{i_1},V_{i_2})e_{G'}(V_{i_3},V_{i_4}) + 2\varepsilon^{\frac{1}{4C}}{n\choose 4} \\\\
      & \leq &  \sum\limits_{(i_1,i_2,i_3,i_4) \in T}(e_{G}(V_{i_1},V_{i_2}) + \varepsilon n^2)(e_{G}(V_{i_3},V_{i_4}) + \varepsilon n^2) +2\varepsilon^{\frac{1}{4C}}{n\choose 4}\\\\
       &   \leq & \lcr(G) + \frac{\varepsilon^{1/2}n^4}{2} + \varepsilon\frac{n^4}{4!}+ 2\varepsilon^{\frac{1}{4C}}{n\choose 4}  \\\\
        & \leq &  \lcr(G) +  \varepsilon^{\frac{1}{4C}}n^4.
  \end{array}$$

\noindent The last inequality follows from the fact that $C$ is a sufficiently large constant.  A symmetric argument also shows that $\lcr(G) \leq \lcr(G') +  \varepsilon^{\frac{1}{4C}}n^4$, and the statement follows.\end{proof}

Let $G$ be an edge-weighted graph on the vertex set $V = \{1,\ldots, K\}$ with weights $w_G(i,j)$.  The blow-up $G[m]$ of $G$ is the edge-weighted graph obtained from $G$ by replacing each vertex $i$ by an independent set $U_i$ of order $m$, and each edge between $U_i$ and $U_j$ has weight $w_G(i,j)$ for $i\neq j$.

\begin{lemma}\label{blowup}

Let $G$ and $G[m]$ be described as above.  Then

$$ 0 \leq \lcr(G[m]) - m^4\lcr(G) \leq K^3m^4.$$

\end{lemma}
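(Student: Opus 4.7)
The plan is to prove the two inequalities separately: the lower bound $\lcr(G[m]) \geq m^4 \lcr(G)$ by a random sampling / averaging argument, and the upper bound $\lcr(G[m]) \leq m^4 \lcr(G) + K^3 m^4$ by explicitly constructing a blow-up drawing from an optimal drawing of $G$.

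For the lower bound, I would fix an optimal straight-line drawing $\mathcal{D}$ of $G[m]$ achieving $\lcr(G[m])$, and independently sample a uniformly random representative $v_i \in U_i$ from each blob. This yields a straight-line drawing $\mathcal{D}'$ of $G$ on the vertex set $\{v_1,\dots,v_K\}$. For any pair of disjoint edges $e_1 = ij$ and $e_2 = k\ell$ of $G$, the probability that the segments $v_iv_j$ and $v_kv_\ell$ cross in $\mathcal{D}'$ equals $C_{e_1,e_2}/m^4$, where $C_{e_1,e_2}$ is the number of crossing pairs in $\mathcal{D}$ between an edge from $U_i$ to $U_j$ and an edge from $U_k$ to $U_\ell$. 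Since every such edge in $G[m]$ carries weight $w_G(i,j)$ (resp. $w_G(k,\ell)$), taking expectation of the weighted crossing count of $\mathcal{D}'$ yields
$$\mathbb{E}\!\left[\sum_{(e,f)\in X_{\mathcal{D}'}} w_G(e)w_G(f)\right] = \frac{1}{m^4}\sum_{\substack{e_1,e_2\in E(G)\\ V(e_1)\cap V(e_2)=\emptyset}} w_G(e_1)w_G(e_2)\,C_{e_1,e_2} \;\leq\; \frac{\lcr(G[m])}{m^4},$$
the inequality holding because the right-hand sum only counts crossings in $\mathcal{D}$ whose four endpoints lie in four distinct blobs, a subset of all crossings in $\mathcal{D}$. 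Since every realization of $\mathcal{D}'$ is a straight-line drawing of $G$, the expectation is at least $\lcr(G)$, and the lower bound follows.

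For the upper bound, I would start with an optimal drawing of $G$ and replace each vertex $i$ by a cluster $U_i$ of $m$ points in general position inside a disk of radius $\varepsilon$ around $i$. Choosing $\varepsilon$ sufficiently small (relative to the minimum distance between any vertex and any non-incident edge in the original drawing) guarantees that for every four distinct indices $i,j,k,\ell$, a segment from $U_i$ to $U_j$ crosses a segment from $U_k$ to $U_\ell$ in the blow-up drawing iff the segments $ij$ and $k\ell$ cross in the original drawing of $G$. Each of the $\lcr(G)$ original crossing pairs thus produces exactly $m^4$ weighted crossings contributing $w_G(i,j)w_G(k,\ell)$, totalling $m^4\lcr(G)$. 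All remaining crossings come from pairs of edges whose four endpoints use at most three distinct blobs, equivalently, pairs of edges sharing a blob, and there are at most $K\binom{(K-1)m^2}{2} \leq K^3m^4/2$ such edge pairs. Since each edge weight is at most one, these contribute at most $K^3m^4$, giving the desired upper bound.

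The main technical point is the perturbation step in the upper bound, which relies on the elementary observation that transversal crossings and non-crossings of two segments are stable under sufficiently small perturbations of their endpoints; apart from this, the argument is a clean bookkeeping of contributions from 4-distinct-blob and blob-sharing pairs.
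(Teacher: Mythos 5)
Your proof is correct and follows essentially the same approach as the paper: the upper bound uses the identical cluster-perturbation construction with the same bookkeeping of blob-sharing pairs, and your lower bound is just a probabilistic restatement of the paper's averaging argument (the paper sums the weighted crossing count over all $m^K$ choices of one representative per blob and divides, whereas you take a uniform random choice and bound the expectation; these are the same calculation).
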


\begin{proof}  We start by proving the second inequality first.   Fix a drawing $\mathcal{D}$ of $G$ such that if $X$ denotes the set of pairs of crossing edges in $\mathcal{D}$, we have

$$\sum\limits_{(e_1,e_2) \in X}w_G(e_1)w_G(e_2) = \lcr(G).$$

\noindent Let $V$ be the point set in the plane representing the vertices of $G$ in the drawing.  We can assume that $V$ is in general position. We draw the blow-up graph $G[m]$ as follows.  For each point $v \in V$ in the plane, we choose a very small $\delta$ and add $m-1$ points in the disk centered at $v$ with radius $\delta$.  These points will represent $U_v$. By choosing $\delta$ sufficiently small, every quadruple of parts $(U_{i_1},U_{i_2},U_{i_3},U_{i_4})$ will have same-type transversals.  Moreover, we can do this so that the resulting point set is in general position.  Finally if $uv \in E(G)$, we draw all edges between the point sets $U_u$ and $U_v$.   Let $X_m$ denote the set of pairs of crossing edges in our drawing of $G[m]$.

Set $U = U_1\cup \cdots \cup U_K$.  We say that the quadruple $(u_1,u_2,u_3,u_4)$ of points in $U$ is bad if two of its members lie in a single part $U_i$.  Hence the number of bad quadruples in $U$ is at most $K{m\choose 2}{Km\choose 2}$.  Since each edge has weight at most one, we have

$$\begin{array}{ccl}
    \lcr(G[m]) & \leq  & \sum\limits_{(e_1,e_2) \in X_m} w_{G[m]}(e_1)w_{G[m]}(e_2) \\\\
      & \leq & m^4\lcr(G) + K{m\choose 2}{Km\choose 2}\\\\
      & \leq &   m^4\lcr(G) + K^3m^4.
  \end{array}$$

\noindent On the other hand, now consider a drawing $\mathcal{D}'$ of $G[m]$ such that if $X'$ denotes the set of pairs of crossing edges in $\mathcal{D}'$, we have

 $$\lcr(G[m]) = \sum\limits_{(e_1,e_2) \in X'}w_{G[m]}(e_1)w_{G[m]}(e_2).$$

\noindent Let $V(G[m]) = U_1\cup\cdots \cup U_K$.   By selecting one point from each $U_i$, we obtain a drawing of $G$ which has at least $\lcr(G)$ weighted pairs of crossing edges.  Summing over all of these $m^K$ distinct drawings of $G$, each weighted crossing appears $m^{K-4}$ times.  Therefore,

$$\lcr(G[m]) \geq \lcr(G)\cdot m^K/m^{K-4} = m^4\lcr(G).$$ This completes the proof.\end{proof}

\section{Proof of Theorem \ref{main}}\label{algorithm}

\noindent\textbf{The algorithm.}  Input: Let $G$ be a graph with vertex set $V = \{v_1,v_2,\ldots, v_n\}$.

\begin{enumerate}

\item Set $\varepsilon = (\log\log n)^{\frac{-1}{2c}}$, where $c$ is defined in Theorem \ref{FKlemma}.  We apply Theorem \ref{FKlemma} to $G$ with approximation parameter $\varepsilon$, and obtain an equitable partition $\mathcal{P}: V = V_1\cup \cdots \cup V_K$ on our vertex set with the desired properties such that $1/\varepsilon < K < 2^{\varepsilon^{-c}} = 2^{\sqrt{\log\log n}}$.  This can be done deterministically in $n^{2 + o(1)}$-time using the algorithm of Dellamonica et al. \cite{DKM}.

\item  Let $G/\mathcal{P}$ be the edge-weighted graph on the vertex set $\{1,\ldots, K\}$ with edge weights $w_{G/\mathcal{P}}(ij) =  \frac{e_G(V_i,V_j)}{|V_i||V_j|}$.  Using Lemma \ref{brute2}, we can find a drawing of $G/\mathcal{P}$ with $\lcr(G/\mathcal{P})$ weighted pairs of crossing edges.  Let $U = \{u_1,\ldots, u_K\}$ be the point set for such a drawing where each point uses at most $2^{O(K\log K)}$ bits.  This can be done in $2^{O(K^3)} = n^{o(1)}$ time.

\item We draw $G = (V,E)$ as follows.  Let $L$ be the set of lines spanned by $U$, and let $\delta$ be the minimum positive distance\footnote{The distance between a point and a line in the plane is the length of the line segment which joins the point to the line and is perpendicular to the line.} between the points in $U$ and the lines in $L$.  Note that $\delta$ uses at most  $2^{O(K\log K)}$ bits since the line spanned by any two points in $U$ will have the form $y = m_0x + b_0$, where $m_0$ and $b_0$ uses at most $2^{O(K\log K)}$ bits.  Therefore, the distance between a point in $U$ and the line $y = m_0x + b_0$ will use at most $2^{O(K\log K)}$ bits.    Set $D(i,\delta/10)$ to be the disk centered at $u_i$ with radius $\delta/10$.  We place the points of $V_i$ in $D(i,\delta/10)$ so that the point set $V_1\cup \cdots \cup V_K$ is in general position, and each point uses at most $2^{O(K\log K)} < O(n)$ bits.  Notice that every quadruple of parts $(V_{i_1},V_{i_2},V_{i_3},V_{i_4})$ has same-type transversals.  We then draw all edges of $G$ on this point set.  This can be done in $O(n^2)$ time.

\item Return: the drawing of $G$.

\end{enumerate}

\noindent The total running time for the algorithm above is $n^{2 + o(1)}$.

Let $\mathcal{D}$ be the drawing of $G = (V,E)$ obtained from the algorithm above, where $V = \{v_1,\ldots, v_n \} \subset \RR^2$, and let $X$ denote the set of pairs of crossing edges in $\mathcal{D}$.  We say that the quadruple of points $\{v_{i_1},v_{i_2},v_{i_3},v_{i_4}\}$ in $V$ is bad if two of its members lie in a single disk $D(j,\delta/10)$.  Hence there are at most $n^4/(2K)$ bad quadruples.  Therefore
\begin{equation}\label{one}
|X| \leq \left(\frac{n}{K}\right)^4\lcr(G/\mathcal{P}) + \frac{n^4}{2K}.\end{equation}

\noindent Just as above, let $G_\mathcal{P}$ be the edge weighted graph with vertex set $V$ (same as $G$), with edge weights $w_{G_\mathcal{P}}(uv)  = e_G(V_i,V_j)/(n/K)^2$, if $u \in V_i$, $v \in V_j$ and $i\neq j$, and $w_{G_\mathcal{P}}(uv) = 0$ otherwise.  Since $G_{\mathcal{P}}$ is an $(n/K)$-blow-up of $G/\mathcal{P}$, by the proof of Lemma \ref{blowup}, we have

\begin{equation}\label{two}\left(\frac{n}{K}\right)^4\lcr(G/\mathcal{P}) \leq \lcr(G_{\mathcal{P}}).\end{equation}

\noindent Since Theorem \ref{FKlemma} implies that the cut-distance between $G$ and $G_{\mathcal{P}}$ satisfies $d(G,G_{\mathcal{P}}) < \varepsilon n^2$, Lemma \ref{cross} implies that

\begin{equation}\label{three}\lcr(G_{\mathcal{P}}) \leq \lcr(G) + \varepsilon^{\frac{1}{4C}}n^4.\end{equation}

\noindent Putting together (\ref{one}), (\ref{two}), and (\ref{three}), shows that

$$|X| < \lcr(G) + O\left(\frac{n^4}{(\log\log n)^{\delta}}\right),$$

\noindent  where $\delta$ is an absolute constant.  This completes the proof of Theorem \ref{main}.

\section{The rectilinear crossing number of quasi-random graphs}\label{quasisect}

\emph{Proof of Theorem \ref{quasithm}.}  Let $\mathcal{D}$ be a straight-line drawing of $G_n$ in the plane with exactly $\lcr(G_n)$ edge crossings, and let $V = \{v_1,v_2,\ldots, v_n\}$ be the point set in the plane that represents the vertices of $G_n$ in the drawing.  Without loss of generality, we can assume no three members of $V$ are collinear, no two members of $V$ share the same $x$-coordinate, and $V$ is ordered by increasing $x$-coordinate.

Set $\varepsilon  = n^{-1/2C}$, where $C$ is defined in Theorem \ref{partition}.  By Theorem \ref{partition}, there is an equitable partition $V = V_1\cup \cdots\cup V_K$ into $K$ parts, where $K \leq \varepsilon^{-C} = \sqrt{n}$, such that all but at most $\varepsilon{K\choose 4}$ quadruples $(V_{i_1},V_{i_2},V_{i_3},V_{i_4})$ of parts have same-type transversals.  Let $Q \subset {V\choose 4}$ be the set of quadruples in $V$ that are in convex position.  We say that a quadruple $(v_{i_1},v_{i_2},v_{i_3},v_{i_4}) \in Q$ is \emph{bad} if two of its members lie in a single part $V_j$, or if they lie in distinct parts of $(V_{j_1},V_{j_2},V_{j_3}, V_{j_4})$ the does not have same-type transversals.  Hence the number of bad quadruples in $Q$ is at most

$$K{n/K\choose 2}{n\choose 2} +  \varepsilon{K\choose 4}\left\lceil\frac{n}{K}\right\rceil^4 \leq \frac{n^4}{4K} + \varepsilon {n\choose 4}.$$

Let $T$ denote the number of quadruples of parts $(V_{i_1},V_{i_2},V_{i_3},V_{i_4})$, where each such quadruple $(V_{i_1},V_{i_2},V_{i_3},V_{i_4})$ has same-type transversals and each such transversal is in convex position.  Then we have

$$T \cdot \left(\frac{n}{K}\right)^4 \geq |Q| - \left(\frac{n^4}{4K} + \varepsilon {n\choose 4}\right)\geq \lcr(K_n) - \left(\frac{n^4}{4K} + \varepsilon {n\choose 4}\right).$$

\noindent Since $$\lcr(G_n) \geq T\left(p\lfloor n/K\rfloor^2 - o(n^2)\right)^2 = Tp^2\lfloor n/K\rfloor^4 - o(n^4),$$ this implies

$$\lcr(G_n) \geq p^2\lcr(K_n) - p^2\left(\frac{n^4}{4K} + \varepsilon {n\choose 4}\right) - o(n^4) = p^2\lcr(K_n) - o(n^4).$$

On the other hand, drawing $G_n$ in the plane by placing its vertices on a point set that minimizes the number of quadruples in convex position, one can follow the arguments above to show that

 $$\lcr(G_n) \leq p^2\lcr(K_n) + o(n^4).$$  This completes the proof of Theorem \ref{quasithm}.\qed

\section{Concluding remarks}

Pach et al.~\cite{shah} introduced the following alternative notion of crossing number.  For any positive integer $k\geq 1$, the \emph{geometric $k$-planar crossing number} of $G$, denoted by $\lcr_k$, is the minimum number of crossings between edges of the same color over all $k$-edge-colorings of $G$ and all straight-line drawings of $G$. By following the proof of Theorem~\ref{main} almost verbatim, we have the following theorem.  We note that one needs to slightly modify the proof of Lemma \ref{cross}, by coloring the edges of $G'$ between parts $V_{i_1}$ and $V_{i_2}$ ($V_{i_3}$ and $V_{i_4}$), so that the number of edges in color $i$ between the two parts in $G'$ is roughly the same as the number of edges in color $i$ between the two parts in $G$.

\begin{theorem}

Let $k\geq 1$ be a fixed constant.  Given any $n$-vertex graph $G$, there is a deterministic $n^{2 + o(1)}$-time algorithm that finds a straight-line drawing of $G$ in the plane, and a $k$-coloring of the edges in $G$, such that the number of monochromatic pairs of crossing edges in the drawing is at most $\lcr_k(G) + O(n^4/(\log\log n)^{\delta})$, where $\delta$ is an absolute constant.

\end{theorem}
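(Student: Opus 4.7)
The plan is to mirror the proof of Theorem \ref{main} with three straightforward extensions. First, I would extend $\lcr_k$ to edge-weighted graphs $H$ by defining $\lcr_k(H)$ as the minimum, over all straight-line drawings $\mathcal{D}$ and all edge colorings $\chi:E(H)\to[k]$, of $\sum_{c\in[k]}\sum_{(e_1,e_2)\in X_{\mathcal{D}},\,\chi(e_1)=\chi(e_2)=c}w_H(e_1)w_H(e_2)$. The brute force in Lemma \ref{brute2} then extends by additionally enumerating the $k^{\binom{K}{2}}=2^{O(K^2\log k)}$ possible colorings, which is absorbed into the $2^{O(K^3)}B^2$ running time. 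The proof of Lemma \ref{blowup} also transfers: assigning every edge of the blow-up $G[m]$ between $U_i$ and $U_j$ the color of $ij$ in $G$ gives the upper bound, while picking a single representative from each $U_i$ and summing over all $m^K$ representative tuples gives the lower bound exactly as in the original argument.

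The key modification is to Lemma \ref{cross}, which I would restate as: for any two $n$-vertex edge-weighted graphs $G,G'$ on the same vertex set with $d(G,G')<\varepsilon n^2$, one has $|\lcr_k(G)-\lcr_k(G')|\leq k\cdot\varepsilon^{1/(4C)}n^4$. To prove $\lcr_k(G')\leq\lcr_k(G)+k\cdot\varepsilon^{1/(4C)}n^4$, I would fix an optimal drawing $\mathcal{D}$ and coloring $\chi$ of $G$, apply Theorem \ref{partition} with parameter $\varepsilon^{1/(4C)}$ to the vertex point set to obtain $V=V_1\cup\cdots\cup V_K$, and identify the set $T$ of quadruples of parts with same-type transversals in convex position. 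Writing $w_c^G(V_i,V_j)$ for the total weight of color-$c$ edges of $G$ between $V_i$ and $V_j$, I would draw $G'$ on the same point set and color it using the proportional rule $w_c^{G'}(V_i,V_j):=w_c^G(V_i,V_j)\cdot e_{G'}(V_i,V_j)/e_G(V_i,V_j)$ for each pair of parts and each color. The cut-distance hypothesis then gives $|w_c^{G'}(V_i,V_j)-w_c^G(V_i,V_j)|<\varepsilon n^2$ per pair and per color, and plugging this into $\lcr_k(G')\leq\sum_{(i_1,i_2,i_3,i_4)\in T}\sum_c w_c^{G'}(V_{i_1},V_{i_2})w_c^{G'}(V_{i_3},V_{i_4})+(\text{bad-quadruple error})$ and expanding yields the claimed bound after absorbing the cross-terms, exactly mirroring the computation in the original Lemma \ref{cross}.

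With these ingredients in place, I would run the algorithm of Section \ref{algorithm} unchanged: apply Theorem \ref{FKlemma} with $\varepsilon=(\log\log n)^{-1/(2c)}$ to get an equitable partition $\mathcal{P}$; apply the extended Lemma \ref{brute2} to $G/\mathcal{P}$ to obtain a drawing on a point set $U$ and a coloring $\chi$ of $G/\mathcal{P}$ achieving $\lcr_k(G/\mathcal{P})$; place the points of $V_i$ inside tiny disks around $u_i$ (as in step~3 of the original algorithm); and color every edge of $G$ between $V_i$ and $V_j$ with the color $\chi(ij)$. The number of monochromatic crossings in the resulting drawing is at most $(n/K)^4\lcr_k(G/\mathcal{P})+n^4/(2K)\leq\lcr_k(G_{\mathcal{P}})+n^4/(2K)\leq\lcr_k(G)+O(n^4/(\log\log n)^{\delta})$, where the first inequality comes from the same-type-transversal calculation, the second from the generalized Lemma \ref{blowup} applied to the $(n/K)$-blow-up $G_{\mathcal{P}}$, and the third from the generalized Lemma \ref{cross}.

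I expect the main obstacle to be the bookkeeping in the modified Lemma \ref{cross}, specifically verifying that the proportional coloring really achieves a per-pair, per-color discrepancy of at most $\varepsilon n^2$, and that summing $\sum_{(i_1,i_2,i_3,i_4)\in T}\sum_c(\varepsilon n^2)(w_c^G+\varepsilon n^2)$ still fits inside the $k\cdot\varepsilon^{1/(4C)}n^4$ budget once $C$ is taken sufficiently large. All other steps are direct transliterations of the corresponding arguments in Sections \ref{sect2} and \ref{algorithm}, with the factor of $k$ (fixed) harmlessly absorbed into the constants hidden by the $O(\cdot)$ notation.
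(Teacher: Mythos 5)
Your proposal is correct and follows essentially the same route the paper has in mind: the paper only gives a one-sentence remark pointing out that one should modify Lemma \ref{cross} by coloring the edges of $G'$ between parts proportionally so that the color-class weights roughly match those of $G$, which is exactly your proportional rule $w_c^{G'}(V_i,V_j)=w_c^G(V_i,V_j)\cdot e_{G'}(V_i,V_j)/e_G(V_i,V_j)$, and your observation that $w_c^G/e_G\le 1$ gives the needed per-pair, per-color discrepancy bound $|w_c^{G'}-w_c^G|<\varepsilon n^2$. The remaining steps (extending Lemma \ref{brute2} by enumerating the $k^{\binom{K}{2}}$ colorings, extending Lemma \ref{blowup} by coloring blown-up edges by the color of the original edge, and running the algorithm with color $\chi(ij)$ assigned to all edges between $V_i$ and $V_j$) are exactly the ``almost verbatim'' transliteration the paper alludes to.
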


\noindent We suspect that Theorem \ref{main} also holds for other crossing number variants.

Let us also remark that the rectilinear crossing number is a testable parameter, which means that there is a constant time randomized algorithm for approximating the rectilinear crossing number. More precisely, for each $\epsilon>0$ there is $t=t(\epsilon)>0$ such that the following holds. If $G$ is a graph on $n$ vertices, by sampling a random induced subgraph $H$ of G on $t$ vertices, we can approximate with probability of success at least $.99$ the rectilinear crossing number of $G$ with error at most $\epsilon n^4$. We do this by noting that the random sample $H$ is, with probability at least $.99$, close in cut-distance to $G$ (see the Lov\'asz book \cite{Lov} for details). By Lemma \ref{cross}, if they are close in cut-distance, we get that $\lcr(G)$ is within $\epsilon n^4$ of $\lcr(H)\frac{n^4}{t^4}$.

\end{document}